\documentclass[times]{anzsauth}
\usepackage{moreverb}
\usepackage{url}
\usepackage{grffile}
\usepackage{lineno}
\usepackage{lipsum}
\usepackage[UKenglish]{isodate}

\usepackage{amsmath,amsfonts,amsthm}
\usepackage{bbold}
\newtheorem{theorem}{Theorem}
\newtheorem{corollary}{Corollary}
\newtheorem{conjecture}{Conjecture}

%\linenumbers

\begin{document}
\cleanlookdateon
\runningheads{Anna Karenina and the Two Envelopes}{RICHARD~D~GILL}
\title{Anna Karenina and The Two Envelopes Problem}
\author{R. D. Gill\addressnum{1}}
\affiliation{Mathematical Institute, Leiden University}
% Specifying address(es) in the manner required by the Journal.
\address{
\addressnum{1} \hspace*{-2ex} Mathematical Insitute,
Niels Bohrweg 1, 2333 CA Leiden, The Netherlands. Email: gill@math.leidenuniv.nl\\Date: 17 March 2021. This version may be thought of as  ``version 5.0''. 
}

\begin{abstract} 
The Anna Karenina principle is named after the opening sentence in the eponymous novel: Happy families are all alike; every unhappy family is unhappy in its own way. The Two Envelopes Problem (TEP) is a much-studied paradox in probability theory, mathematical economics, logic, and philosophy. Time and again a new analysis is published in which an author claims finally to explain what actually goes wrong in this paradox. Each author (the present author included) emphasizes what is new in their approach and concludes that earlier approaches did not get to the root of the matter. We observe that though a logical argument is only correct if every step is correct, an apparently logical argument which goes astray can be thought of as going astray at different places. This leads to a comparison between the literature on TEP and a successful movie franchise: it generates a succession of sequels, and even prequels, each with a different director who approaches the same basic premise in a personal way. We survey resolutions in the literature with a view to synthesis, correct common errors, and give a new theorem on order properties of an exchangeable pair of random variables, at the heart of most TEP variants and interpretations. A theorem on asymptotic independence between the amount in your envelope and the question whether it is smaller or larger shows that the pathological situation of improper priors or infinite expectation values has consequences as we merely approach such a situation.
\end{abstract}

\keywords{Recreational mathematics, mathematical paradoxes, Monty Hall problem, Exchange paradox, Necktie problem, Saint Petersburg paradox}

\ack{I'm very grateful to Rianne de Heide, who compiled the bibliography file for me as a first step in a new joint research project.}

\maketitle

\section{Dedication} This paper is dedicated to Adrian Baddeley on the occasion of his 65th birthday. Adrian: we wrote some fantastic papers together, and you have been a great friend ever since we met in Bressanone, 1981, when we were ``Young Statisticians''. I hope we will both remain young statisticians in heart and soul, and stay in touch, for many years to come. I suspect that you were the one who first introduced me to the three door problem, for whose definitive solution see \cite{gill2011mhp}, and you certainly later introduced me to Tom Cover's problem, \cite{cover1987pick}, which is covered in this survey too. 

\section{TEP-1}
\subsection{Introduction}
Here is the (currently) standard form of the Two Envelopes Problem (TEP), as given by \cite{falk2008unrelenting}, who cites Wikipedia for the precise formulation. Wikipedia cites \cite{falk2008unrelenting}, so this is kind of frozen now. I will postpone remarks on the (pre-)history of TEP till near the end of the paper. Writing for probabilists and statisticians I shall move fast through (for us) easy developments. However on the way I will discuss logicians', philosophers', and economists' approaches and thereby call into question the very assumptions that for ``us'' probabilists and statisticians are as natural as the air we breathe, hence taken for granted. Though Bayesians and frequentists may also live in different worlds.

You are given two indistinguishable envelopes, each of which contains a positive sum of money. One envelope contains twice as much as the other. You may pick one envelope and keep whatever amount it contains. You pick one envelope at random but before you open it you are offered the possibility to take the other envelope instead. Now consider the following reasoning:
\begin{enumerate}
\item I denote by $A$ the amount in my selected envelope.
\item The probability that $A$ is the smaller amount is $1/2$, and that it is the larger amount is also $1/2$.
\item The other envelope may contain either $2A$ or $A/2$.
\item If $A$ is the smaller amount the other envelope contains $2A$.
\item If $A$ is the larger amount the other envelope contains $A/2$.
\item Thus the other envelope contains $2A$ with probability $1/2$ and $A/2$ with probability $1/2$.
\item So the expected value of the money in the other envelope is $(1/2) 2A + (1/2)(A/2) = 5A/4$.
\item This is greater than $A$, so I gain on average by swapping.
\item After the switch, I can denote that content by $B$
 and reason in exactly the same manner as above.
\item I will conclude that the most rational thing to do is to swap back again.
\item To be rational, I will thus end up swapping envelopes indefinitely.
\item As it seems more rational to open just any envelope than to swap indefinitely, we have a contradiction.
\end{enumerate}

Notice that the problem is \emph{not} to give a correct proof that there is no point in switching. The problem, which many authoritative writers admit still defeats them, is to explain what is wrong with the arguments given above.

For a mathematician it helps to introduce some more notation. I'll refer to the envelopes as Envelope A and Envelope B, and the amounts in them as $A$ and $B$. Let me introduce $X$ to stand for the smaller of the two amounts and $Y$ to stand for the larger. I think of all four as being random variables; but this includes the situation that we think of $X$ and $Y$ as being two fixed though unknown amounts of money $x$ and $y=2x$: a degenerate probability distribution is also a probability distribution, a constant is also a random variable. It includes the model of a frequentist statistician who imagines (or has been reliably informed that) the organizer of this game repeatedly chooses, according to a fixed probability distribution, a new random amount $X$ to be the smaller of the two; then the other amount is determined as $Y=2X$, and finally by the toss of a fair coin (independent of the two amounts) one is put in Envelope A and the other in Envelope B, defining random variables $A$ and $B$. On the other hand, it also includes the model of a true Bayesian statistician which formally is identical to what I just described, but where the probability law of the random variable $X$ is her subjective prior distribution of the unknown, smaller, amount of money in the two envelopes, in one specific realisation of the game. For her, $x$ is a fixed but unknown positive quantity, and the law of the artificial random variable $X$ encapsulates her prior beliefs about $x$. For the frequentist, $x$ is the actually realised value of a physical random variable $X$. Both he and she know that Envelope A is filled by tossing a fair coin and then putting either $x$ or $y=2x$ in it, and since the calculus of subjectivist probability is the same as the calculus of frequentist probability (Kolmogorov rules!), their mathematical models are identical: only their interpretation is different.

So we have four random variables $X$, $Y$, $A$ and $B$ and it is given that $Y=2X>0$ and that $(A,B)=(X,Y)$ or $(Y,X)$. The assumption that the envelopes are indistinguishable and closed and one is picked at random, translates into the assumption that the event $\{A=X\}$ has probability 1/2, whatever the amount $X$; in other words, the random variable $X$ and the event $\{A=X\}$ are independent. And to repeat what I just stated: the notation does not prejudice the question whether probability is taken in its subjectivist or frequentist interpretation -- do we use probability to represent our (lack of) knowledge, or do we use probability to represent chance mechanisms in the real world?

I consider the argument steps 1--12 together with the structural relationships and probabilistic properties of $A$, $B$, $X$ and $Y$ to be the definition of The Two Envelopes Problem (TEP), or more precisely, The Original Two Envelopes Problem (TEP-1). Just as a successful movie may spawn a series of sequels and occasionally even prequels, TEP has done the same. We must therefore be careful to distinguish between the entire franchise TEP and the original TEP. Moreover, the original TEP did not come out of thin air, but had a history. Think of old movies which the public might have forgotten, but the directors of new movies certainly hadn't.

The alert probabilist will notice that something is going wrong in steps 6 and 7. An expectation value is being computed, but how? Is it a conditional expectation or an unconditional expectation? These are two main interpretations of the intention of the author of 1--12: the author meant to compute the unconditional expectation $E(B)$, or the conditional expectation $E(B|A)$. However the author does not reveal his intention so this is pure guesswork on our side. Curiously, probabilists and statisticians such as \cite{falk2008unrelenting} and \cite{christensen1992bayesian} tend to go for the conditional expectation, while philosophers such as \cite{schwitzgebel2008two} think more often that an unconditional expectation was intended. I will describe the philosopher's choice (and many layperson's choice) first.

\subsection{The philosopher's choice}
Let's explore the philosopher's interpretation first. According to that interpretation we are aiming at computation of $E(B)$ by conditioning on the two cases separately: $X=A$ (envelope A contains the smaller amount of money), $X=B$ (envelope B contains the smaller amount). If that is so, then the rule which we want to use is
$$E(B)=P(A=X)E(B\mid A=X)+P(B=X)E(B\mid B=X).$$
The two situations have equal probability $1/2$, as mentioned in step 6, and those probabilities are then substituted, correctly, in step 7. However according to the this interpretation, the two conditional expectations are screwed up. A correct computation of $E(B\mid A=X)$ is the following: conditional on $A=X$, $B$ is identical to $2X$, so we have to compute $E(2X\mid A=X)=2 E(X\mid A=X)$. But we are told that whether or not Envelope A contains the smaller amount $X$ is independent of the amounts $X$ and $2X$, so $E(X\mid A=X)=E(X)$. Similarly we find $E(B\mid B=X)=E(X\mid B=X)=E(X)$.

Thus the expected values of the amount of money in Envelope B are $2E(X)$ and $E(X)$ in the two situations that it contains the larger and the smaller amount. The overall average is $(1/2)2E(X)+(1/2)E(X)=(3/2)E(X)$. Similarly this is the expected amount in Envelope A.

The clearest exponents of the philosophers' diagnosis of the core of the problem are \cite{schwitzgebel2008two} who write: ``\emph{You would expect less in {\rm Envelope A} if you knew that it was the envelope with less than you would if you knew it was the envelope with more}''.  This is perfectly correct, and I think a very intuitive explanation. In fact, we can easily say something stronger: the expected amount in the second envelope given it's the larger of the two is twice the expected amount given it's the smaller!

As many philosophy authors repeat, the resolution of the paradox is that the writer has committed the sin of equivocation: using the same words to describe different things. However this is equivocation of somewhat subtle concepts. Taking the subjective Bayesian interpretation of our model, we are confusing \emph{our beliefs about $b$}, the amount in the second envelope, in the situation where we imagine being informed that it is the larger amount, from what we imagine our beliefs about it would be if we were to imagine being informed that it is the smaller amount. And at the same time we are making an even more serious equivocation, namely of levels: we are confusing expectation values with actual values.

In my opinion the philosopher's interpretation is very far fetched. However it seems to be a very common way in which also ordinary lay persons interpret the context and intent of the writer. There is a very different way to interpret the intention of the writer of steps 6 and 7 which is far more common in the probability literature. Apparently it comes completely naturally to ``us'' probabilists and statisticians, while it is far too sophisticated ever to occur to ordinary folk.

\subsection{The probabilist's choice}
Since the answers are expressed in terms of the amount in Envelope A, it also seems reasonable to suppose that the writer intended to compute $E(B\mid A)$. Contrary to what many writers imagine, this in no way implies that our player is actually looking in his envelope. The point is that he can imagine what his expectation value would be of the contents of Envelope B, for any particular amount $a$ he might \emph{imagine} seeing in his own Envelope A, \emph{if} he were to take a peek. If it would appear favourable to switch \emph{whatever} that imaginary amount might be, then he has no need to peek in his envelope at all: he can decide to switch anyway.

The conditional expectation $E(B\mid A=a)$ can be computed just as the ordinary expectation, by averaging over two situations, but the mathematical rule which is being used is then
$$E(B\mid A)~=~P(A=X\mid A)E(B\mid A=X,A)+P(B=X\mid A)E(B\mid B=X,A).$$
If this was the writer's intention, then in step 7 he correctly substitutes $E(B\mid A=X,A)=E(2X\mid A=X,A)=E(2A\mid A=X,A)=2A$ and similarly $E(B\mid B=X,A)=A$. But he also takes $P(A=X\mid A)=1/2$ and $P(B=X\mid A)=1/2$, that is to say, the writer assumes that the probability that the first envelope is the smaller or the larger doesn't depend on how much is in it. But it obviously could do! For instance if the amount of money is bounded then sometimes one can tell for sure whether Envelope A contains the larger or smaller amount from knowing how much is in it.

In probabilistic terms, under this interpretation, the writer has mistakenly taken independence of the event $\{X=A\}$ from the amount $A$ as the same as the implicitly given assumption that the event $\{A=X\}$ is independent of the random variable $X$.

\subsection{The heart of the matter}
In probability theory we know that (statistical) independence is symmetric. In particular, it is equivalent to say that $A$ is statistically independent of $\{A=X\}$ and to say that $\{A=X\}$ is statistically independent of $A$. The probabilist's interpretation of the mess was that the writer incorrectly assumed $\{A=X\}$ to be independent of $A$. The philosophers Schwitzgebel and Dever's interpretation was that the writer incorrectly assumed $A$ to be independent of $\{A=X\}$.

One point I'm making is that we have no way of knowing what the original writer was meaning to do. One thing is clear: he is doing probability calculations in a sloppy way.  He is computing an expectation by taking the weighted average of the expectations in two different situations. Either he gets the expectations right but the weights wrong, or the weights right but the expectations wrong (or is there a third possibility?). Is he confusing random variables and possible values they can take? Or conditional expectations and unconditional expectations? Conditional probabilities and unconditional probabilities? That simply cannot be decided. TEP-1 has many cores. And these many cores give some reason for the branching family of variant paradoxes which grew from it.

The analysis so far leads me to the interim conclusion that TEP-1 does not deserve to be called a paradox (and certainly not an unresolved paradox, as many writers in philosophy still insist on claiming): it is merely an example of a screwed-up probability calculation where the writer is not even clear what he is trying to calculate. The mathematics being used appears to be elementary probability theory, but whatever the writer is intending to do, he is breaking the standard, elementary rules. Steps 6 and 7 \emph{together} are inconsistent. One cannot say that one of the steps is wrong and the other is right. One can offer as diagnosis, that the inconsistency is caused by the author giving the same names to different things, or the same symbols to different things. We can't deduce what he is confusing with what. He probably is not even aware of the distinctions.  (However ... in the next section I will show that this interim conclusion is hasty. Maybe the writer was smarter than we give him credit for.) 

But first of all I will present a little theorem which ought to be known in the literature, but which however almost nobody seems to realize is true.  

We saw that both philosophers and probabilists both put their finger on essentially the same point: the random variable $A$ \emph{need not} be independent of the event $\{A=X\}$. We can say something a whole lot stronger. The random variable $A$ \emph{cannot} be independent of the event $\{A=X\}$.

Let me make a side remark here, connected to the parenthetical ``however'' above. Suppose that the writer of TEP is a subjective Bayesian. The intended interpretation of the random variables $X$, $Y$, $A$ and $B$ is therefore that their joint probability distribution represents the writer's prior knowledge or uncertainty about the actual amounts involved. Denote the actual smaller and large amount as $x>0$ and $y=2x$, and denote by $a$ and $b$ the actual amounts in the first and second envelopes. These are fixed, unknown amounts of money. The probability distribution of $X$ encapsulates the writer's prior knowledge about $x$.  From this, his prior knowledge about all four amounts is defined by first defining $Y=2X$ and then defining $A$ and $B$ as follows: independently of $X$, with probability one half, $A=X$ and $Y=B$; with the complementary probability one half, $A=Y$ and $B=X$. Since the mathematics I am about to do assumes I am within conventional probability theory, it follows that I started with a  \emph{proper} probability distribution for $X$. Our Bayesian does not have an improper prior. We will return to the possibility of an improper prior in the next section.

Define the random variable $\Delta$ to be the indicator variable of the event $\{A>B\}$, thus also of the event $\{A = Y\}$. Envelope A contains the larger of the two amounts.
\begin{theorem} The random variables $A$ and $\Delta$ cannot be statistically independent.
\end{theorem}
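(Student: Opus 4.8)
The plan is to argue by contradiction, exploiting the single scaling relation that binds $A$, $X$ and $\Delta$ together. First I would record the structural consequences of the setup. Since $A$ takes only the values $X$ and $Y=2X$, we have $\{A=X\}=\{\Delta=0\}$ and $\{A=Y\}=\{A=2X\}=\{\Delta=1\}$, so the amount in Envelope A can be written in the single formula $A=2^{\Delta}X$, where $\Delta$ takes the values $0$ and $1$, each with probability $1/2$. Moreover, the standing assumption that the event $\{A=X\}$ is independent of the amount $X$ says exactly that $\Delta$ is independent of $X$.

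Next, assuming toward a contradiction that $A$ and $\Delta$ are independent, I would introduce the distribution function $G$ of $X$, namely $G(a)=\Pr(X\le a)$, and relate it to the law of $A$. Using $\Delta$ independent of $X$, for every $a>0$ I get $\Pr(A\le a,\Delta=0)=\Pr(X\le a)\Pr(\Delta=0)=\tfrac12 G(a)$ and $\Pr(A\le a,\Delta=1)=\Pr(X\le a/2)\Pr(\Delta=1)=\tfrac12 G(a/2)$. Adding these gives $\Pr(A\le a)=\tfrac12\bigl(G(a)+G(a/2)\bigr)$. The assumed independence of $A$ and $\Delta$ forces $\Pr(A\le a,\Delta=0)=\Pr(A\le a)\Pr(\Delta=0)$; substituting the expressions above and cancelling reduces this to the functional equation $G(a)=G(a/2)$ for all $a>0$. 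Equivalently, $X$ and $2X$ have the same distribution.

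Finally I would extract the contradiction from this scale invariance. Iterating the functional equation gives $G(a)=G(a/2)=\cdots=G(a/2^{n})$ for every $n$, and letting $n\to\infty$ the right-hand side tends to $\Pr(X\le 0)=0$, because $X>0$. Hence $G(a)=0$ for every $a>0$, which contradicts $G(+\infty)=1$, a requirement that holds precisely because $X$ has a \emph{proper} probability distribution. Therefore no such proper law exists, and $A$ and $\Delta$ cannot be statistically independent.

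I expect the conceptual crux, rather than any computation, to lie in this last step: recognising that the self-similarity $X\stackrel{\mathrm d}{=}2X$ is incompatible with being a genuine probability distribution on $(0,\infty)$. This is exactly where properness of the prior is used, and it foreshadows the role of improper, scale-invariant priors --- for which $X\stackrel{\mathrm d}{=}2X$ \emph{does} hold --- in the pathological variants to be discussed in the next section.
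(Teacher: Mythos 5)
Your proof is correct, but it takes a genuinely different route from the paper's. The paper's proof uses none of the special two-envelope structure: it relies only on the exchangeability of $(A,B)$ and the fact that $A\ne B$ almost surely. It notes that $E(A-B\mid A-B>0)>0$, hence $E(A\mid A>B)>E(B\mid A>B)=E(A\mid A<B)$ by symmetry, so the conditional law of $A$ differs according to whether $A>B$ or $A<B$, which is exactly non-independence; a preliminary bounded strictly increasing transformation (e.g.\ arc tangent) of both variables handles the case of infinite expectations. Your argument instead exploits the scaling relation $A=2^{\Delta}X$ with $\Delta$ Bernoulli(1/2) independent of $X$: assumed independence of $A$ and $\Delta$ collapses to the functional equation $G(a)=G(a/2)$, i.e.\ $X\stackrel{\mathrm d}{=}2X$, which no proper law on $(0,\infty)$ can satisfy since iterating and letting $a/2^n\downarrow 0$ kills all the mass. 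What your route buys: it is entirely elementary (distribution functions only, no conditional expectations, no transformation trick), and it identifies the precise obstruction --- scale invariance --- thereby connecting directly to the improper $1/x$ prior; in fact the paper runs essentially your computation later, in the TEP-2 section, to characterize the (necessarily improper) measures for which $A$ and $\Delta$ \emph{are} independent, but deliberately presents that as a separate, constructive result rather than as the proof of this theorem. What the paper's route buys: generality. Because it uses only exchangeability and almost-sure distinctness --- not the factor 2, not positivity of the amounts, not the independence of $\Delta$ from $X$ --- it extends verbatim to the paper's Theorem 2 and hence to the whole family of exchange paradoxes (neckties, wallets, Schr\"odinger's version) where there is no ``twice as much'' structure; your argument is confined to the two-envelope setup proper, and also yields only non-independence, whereas the paper's method delivers the stronger stochastic-ordering conclusions recorded in Theorem 2.
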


\begin{proof} Suppose to start with that $A$ and $B$ have finite expectation values. Note that $E(A-B\mid A-B>0)>0$. That's the same, since all expectation values are finite, as $E(A\mid A>B)>E(B\mid A>B)=E(A\mid B>A)$. In the last step we used the symmetry of the joint distribution of $A$ and $B$.

Now if the expectation of $A$ depends on whether $A>B$ or $B>A$ then the distribution of $A$ depends on which is true, or in other words, the random variable $A$ is not stochastically independent of the event $A>B$. Equivalently, the event $A>B$ is not independent of the random variable $A$.

For the general case, choose some strictly increasing map from the positive real line to a bounded interval, for instance, arc tangent. Apply this transformation to both $A$ and $B$ and then apply the argument just given to the transformed variables. The ordering of the variables is unaffected by the transformation. So we find that the transformed variable $A$ is not independent of the event $\{A<B\}$, and this implies the non-independence of $A$ of this event. \qed
\end{proof}

Note that we only used the symmetry of the distribution of $A$ and $B$, and the fact that these variables have positive probability to be different. We did not use their positivity. As we will see at the end of the paper, this little theorem lies at the heart not only of the two envelope paradox but also of a whole family of related exchange paradoxes. In every case, the originators of the paradoxes (or the first to ``solve'' them) have ``explained'' the paradox by doing explicit calculations in a particular case. This always leaves later writers with a feeling that the paradox has not really been solved. Indeed, just giving one example does not prove a general theorem. One swallow does not make a summer.

\cite{samet2004one} seem to be the only mathematicians writing in English on TEP who know the general theorem. Olle H\"aggstr\"om knows it, and has published books  \emph{Slumpens sk\"ordar : str\"ovt\aa g i sannolikhetsteorin} and \emph{Streifz\"uge Durch Die Wahrscheinlichkeitstheorie}, \cite{haggstrom1, haggstrom2}, in Swedish and German respectively. \cite{samet2004one}  prove a weaker result in a more general situation: they do not assume symmetry. Their proof is a little more tricky than ours, but still, not much more than a page and basically elementary too. When one adds the assumption of symmetry their result gives ours. \cite{eckhardt2013} Chapter 8  \emph{The Two-Envelopes Problem}  has some nice mathematical results (which I admit that I have not yet digested), which seem to give the same global messages as this paper.

Our proof showed that for any strictly monotone increasing function $g$ such that $E(g(A))$ exists and is finite, $E(g(A)\mid A<B)<E(g(A))<E(g(A)\mid A>B)$. Approximating a not strictly monotone function by strictly increasing functions and going to the limit, we obtain the same inequalities only possibly not strict for all monotone increasing $g$ with $E(g(A))$ exists and finite. This is the same as saying that the laws of $A$ given $A<B$, of $A$ itself, and of $A$ given $A>B$, are strictly stochastically ordered: for all $a$ $P(A>a \mid A<B)\le P(A>a)\le P(A>a \mid A>B)$ , with strict inequality for some $a$. This observation gives us the following general theorem:

\begin{theorem} Suppose $A$ and $B$ are two random variables, unequal with probability 1, and whose joint distribution is symmetric under exchange of the two variables. Then $$P(A < B \mid A) \ne P( B < A \mid A);$$ in other words, for a set of values of $A$ with positive probability, $$P(A < B \mid A=a ) \ne P( B < A \mid A=a ).$$ Also, the laws of $A$ conditional on $A<B$, unconditional, and conditional on $A>B$ are strictly stochastically ordered (from small to large); in other words, $$P(A>a \mid A<B)\le P(A>a)\le P(A>a \mid A>B)\text{ for all } a,$$ with strict inequality for $a$ with positive probability under the law of $A$.
\end{theorem}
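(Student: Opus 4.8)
The plan is to build the statement from three ingredients that are already almost entirely present in the excerpt: the first assertion is a restatement of Theorem~1, the weak stochastic ordering follows from a single monotonicity-plus-symmetry computation that needs no limiting argument, and only the strict refinement requires genuinely new analytic work. Throughout I would use that $A\ne B$ almost surely together with exchangeability forces $P(A>B)=P(B>A)=1/2$, so every conditional expectation below is well defined; and I note that Theorem~1 applies verbatim here, since (as remarked after its proof) only symmetry and positive probability of $A\ne B$ were used.

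For the first displayed claim I would argue by contradiction from Theorem~1. If $P(A<B\mid A)=P(B<A\mid A)$ almost surely, then, because these two conditional probabilities add to $P(A\ne B\mid A)=1$ almost surely, each equals $1/2$ almost surely. Hence $P(\Delta=1\mid A)=1/2$ almost surely, where $\Delta$ is the indicator of $\{A>B\}$; but a two-valued random variable whose conditional success probability given $A$ is the constant $1/2$ is independent of $A$, contradicting Theorem~1. Therefore $P(A<B\mid A)\ne P(B<A\mid A)$ on a set of $A$-values of positive probability.

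For the ordering I would deliberately avoid the approximation used in the preceding discussion for the weak inequalities. Fix any bounded non-decreasing $g$; the key point is that on the event $\{A>B\}$ one has $g(A)\ge g(B)$ pointwise, whence
$$E\big(g(A)\mid A>B\big)\ \ge\ E\big(g(B)\mid A>B\big)\ =\ E\big(g(A)\mid A<B\big),$$
the final equality being the exchange symmetry $(A,B)\stackrel{d}{=}(B,A)$ applied to the event $\{A>B\}$ and to $g(B)$. Since $E(g(A))$ is the average of the two conditional expectations with weights $1/2$, it lies between them. Taking $g$ to be the indicator of $(a,\infty)$, which is bounded and non-decreasing, yields exactly $P(A>a\mid A<B)\le P(A>a)\le P(A>a\mid A>B)$ for every $a$.

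The work is in the strictness. First, choosing $g$ bounded and strictly increasing (for instance $g=\arctan$, which also disposes of any integrability worry when $A$ is unbounded), the inequality $g(A)>g(B)$ on $\{A>B\}$ is now strict and $P(A>B)>0$, so $E(g(A)\mid A>B)>E(g(A)\mid A<B)$; hence the conditional laws $\mu:=\mathcal L(A\mid A<B)$ and $\nu:=\mathcal L(A\mid A>B)$ are genuinely distinct. Writing $D(a):=P(A>a\mid A>B)-P(A>a\mid A<B)\ge 0$, the set on which the theorem's inequalities are strict is precisely $\{a:D(a)>0\}$, and what remains is to show this set carries positive probability under the law of $A$, which I denote $\mathcal L(A)=\tfrac12\mu+\tfrac12\nu$. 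This is the main obstacle: $\{D>0\}$ is automatically nonempty and in fact contains intervals, but such an interval could a priori avoid all the mass of $A$, so ``distinct laws'' does not by itself give ``positive $\mathcal L(A)$-mass''. I would resolve it by observing that $dD=d\mu-d\nu$ is absolutely continuous with respect to $\mathcal L(A)$; so if $\{D>0\}$ were $\mathcal L(A)$-null it would be $|dD|$-null as well, forcing $\int D\,dD=0$. Since $D$ is right-continuous, of bounded variation, and vanishes at $\pm\infty$, the quadratic identity for $\int D\,dD$ then makes $D$ jump-free, and constancy of $D$ on any interval where it stays strictly positive forces $D\equiv 0$, contradicting the distinctness of $\mu$ and $\nu$. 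It is this last measure-theoretic lemma, rather than anything specific to envelopes, where the real content of the strict refinement sits.
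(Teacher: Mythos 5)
Your proposal is correct, but it is not the paper's argument; in places it is more careful than the paper itself. The paper obtains this theorem as a gloss on the proof of Theorem~1: strict inequalities $E(g(A)\mid A<B)<E(g(A))<E(g(A)\mid A>B)$ for strictly increasing $g$ with finite mean, then an approximation step (strictly increasing functions approximating monotone ones, e.g.\ indicators) to get the weak inequalities $P(A>a\mid A<B)\le P(A>a)\le P(A>a\mid A>B)$, and finally the bare remark that the ordering is strict ``for some $a$''. You replace the approximation step by the direct pointwise bound $g(A)\ge g(B)$ on $\{A>B\}$ together with one use of exchangeability, which is simpler and needs no limit; and your derivation of the first display from Theorem~1 (constant conditional probability $\tfrac12$ for a binary variable is the same as independence) matches the paper's intent, legitimately invoking the paper's remark that Theorem~1 only used symmetry and $P(A\ne B)>0$. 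More significantly, you have spotted the real gap: ``strict for some $a$'' (equivalently $\mathcal{L}(A\mid A<B)\ne\mathcal{L}(A\mid A>B)$, which the arctangent argument gives) is genuinely weaker than what the theorem asserts, namely strictness on a set of \emph{positive $\mathcal{L}(A)$-probability}, and the paper never actually proves that stronger claim. Your concluding lemma does prove it, and it is sound: $dD=d\mu-d\nu$ is absolutely continuous with respect to $\mathcal{L}(A)=\tfrac12(\mu+\nu)$, so if $\{D>0\}$ were $\mathcal{L}(A)$-null then $\int D\,dD=0$; the integration-by-parts identity $\int D\,dD=\tfrac12\sum_u\bigl(D(u)-D(u-)\bigr)^2$ for a right-continuous BV function vanishing at $\pm\infty$ then kills all jumps; and a continuous $D$ that is locally constant on the open set $\{D>0\}$ and zero on its boundary (and at $\pm\infty$) must vanish identically, contradicting $\mu\ne\nu$. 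For what it is worth, a shorter route to the same strictness: $D(a)=P(A\wedge B\le a<A\vee B)$, so $\int D\,d\mathcal{L}(A)=0$ would force two independent copies of the half-open random interval $[A\wedge B,\,A\vee B)$ to be almost surely disjoint, which is impossible since some fixed rational is covered with positive probability. Either way, this step is a completion of, not a reproduction of, the paper's proof.
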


Intuitively, $P(A<B \mid A=a)$ ought to be decreasing in $a$. Simple examples show that this is not necessarily true. However it is true in  a certain average sense. For any $a_0$, the result when averaging over $a<a_0$ is never larger than the result when averaging over $a\ge a_0$, where the averaging is with respect to the appropriately normalized law of $A$. To be precise:
$$E(P(A<B \mid A) \mid A<a_0)\ge P(A<B)=1/2\ge E(P(A<B \mid A)|A\ge a_0)$$
for all $a_0$, with both inequalities strict for some $a_0$.

The just mentioned average ordering of the conditional probabilities $P(A<B \mid A=a)$ and the stochastic ordering of the conditional (given the ordering of $A$ and $B$) and unconditional laws of $A$ are exactly equivalent results, and both are forms of the statement that the random variable $A$ and the indicator variable of the event $\{A>B\}$ are strictly \emph{positive orthant dependent}. Recall that $X$ and $Y$ are positive orthant dependent if for all $x$ and $y$, $P(X\ge x, Y\ge y)\ge P(X\ge x)P(Y\ge y)$; I call the dependence strict if there exist $x$ and $y$ such that the inequality is strict. 

\section{TEP-2}
Just like a great movie, the success of TEP led to several sequels and to a prequel, so nowadays when we talk about TEP we have to make clear whether we mean the original movie TEP-1 or the whole franchise.

However before introducing TEP-2 proper, I'll present some intermediate material belonging formally in TEP-1.

\subsection{The totally ignorant Bayesian}
Are steps 6 and 7 of the TEP argument really inconsistent? Suppose the author is actually a Bayesian and the probability distribution she is using for $X$ summarizes her prior knowledge about this amount of money. Suppose she knows absolutely nothing about it, except that it is positive. In that case, if she knows nothing about $X$, she knows nothing about $cX$, for any positive $c$. In particular, if we know nothing about $X$ then knowing $A$ intuitively gives us no clue at all as to whether it is $X$ or $2X$. 

Now, if knowledge (or lack thereof) can be expressed by probability measures, then the probability measure expressing total ignorance about $X$ and that expressing total ignorance about $cX$ must be the same, for any $c>0$.  The only locally bounded measures on the positive half line invariant under multiplication by just two constants $c>0$ and $c'>0$, both different from 1, and such that the ratio of their logarithms is irrational, are those with Lebesgue density proportional to $1/x$. For instance: $c=2$ and $c'=e$. The only bounded measures on the positive half line invariant under multiplication by any positive number are those with density proportional to $1/x$.

Probability theorists will now retort that there is no proper probability distribution with density proportional to $1/x$, end of story! As a probabilist myself, I must agree that this is the end of the road as far as conventional probability theory is concerned. But this does not mean that the matter is thereby closed. For example, creative physicists repeatedly invent mathematical structures that do not exist in contemporary mathematics, they sometimes turn out to be powerful and effective, and if so, they will eventually be incorporated into new mathematics. That a certain formal mathematical framework for some real world domain (reasoning and decision making under uncertainty) does not hold a representative of a conceptual object belonging to that field could just as well be seen as a defect of standard probability theory. In any case, the standard framework of probability theory does contain arbitrarily close approximations to the improper prior. If the author only meant to write that since she knows almost nothing about $X$, it then follows that given $A$, $\Delta$ is pretty certain to be very close to Bernoulli(1/2), we could not fault steps 6 and 7.

Let me make this reasoning firm and also show where it leads to, namely to a whole class of new TEP paradoxes which I'll call TEP-2. This is where TEP moves from probability theory to mathematical economics. But first we stick within (or very close to) probability theory.

Suppose $X$ has the probability distribution with density $c/x$ on the interval $[\epsilon, M]$, zero outside. An easy calculation shows that the proportionality constant is $c=1/\log(M/\epsilon)$. From this we find that the joint distribution of $(A,\Delta)$ has density $c/(2x)$ on 
$$
[\epsilon, M]\times\{0\}~~\cup~~[2\epsilon, 2M]\times\{1\}
$$
and hence the conditional distribution of $\Delta$ given $A$ is Bernoulli(1/2) for $A=a\in[2\epsilon,M]$, while it is degenerate for $a\in [\epsilon,2\epsilon)\cup(M,2M]$. Note that the probability that the distribution of $\Delta$ given $A$ is \emph{not} Bernoulli(1/2) converges to zero as $\epsilon\to0,M\to\infty$.

Similarly, the discrete uniform distribution on $2^k, k=-M,...,N$ has this property as $M,N\to\infty$, and can be seen as an approximation to the improper prior which is uniform on \emph{all} integer powers (positive and negative) of 2.

Let me give an elementary proof characterizing all probability distributions (proper or improper) such that $A$ and $\Delta$ are independent. This seems to me to be more constructive than giving a proof showing that no proper probability distribution exists with this property. However, since I am working with improper as well as proper distributions I have to be a bit careful with probability theory: I move to measure theory, supposing $X$ is ``distributed'' according to a measure on $(0,\infty)$. We understand, I am sure, what I mean by supposing that $\Delta$ is Bernoulli(1/2), independently of $X$, and now I can define $(A,\Delta)$ as function of $(X,\Delta)$ and this generates an image measure on the range of $(A,\Delta)$ which is simply a copy of half of the original improper distribution of $X$ on $(0,\infty)\times\{0\}$ together with half of the original improper distribution of $2X$ on $(0,\infty)\times\{1\}$. We assume that this measure exhibits independence between $A$ and $\Delta$. But that simply means that the improper distributions of $X$ and of $2X$ are identical. Taking logarithms to base 2 the improper distributions on the whole real line of $\log_2 X$ and of $1+\log_2 X$ are identical. The distribution of $\log_2 X$ is invariant under a shift of size +1 and hence under all integer shifts. Such measures are easy to characterize: place an arbitrary measure on the interval $[0,1)$ and glue together all integer shifts of this measure to a measure on the real line. In semi-probabilistic terms, now using $\{.\}$ to denote the fractional part of a real number, $\{\log_2(X)\}$ and $\lfloor \log_2(X)\rfloor$ are independent, with the integer part being uniformly distributed over all integers, and the fractional part having an arbitrary distribution.

It would be nice to show that all probability distributions of $X$ which have $\Delta$ and $A$ approximately independent, are approximately of this form. The crux of the matter is therefore to choose meaningful notions of both instances of ``approximate''. Also, it would be nice to get rid of the special dependence on the number 2. We could just as well have formulated the two envelopes problem using any other factor, at least, large enough to make exchange seem attractive. If a measure on the real line is invariant under all shifts then it has to be uniform. If it is invariant under two relatively irrational shifts then it is uniform.  If it is locally bounded and invariant under all rational shifts it is uniform. 

So far I only succeeded in deriving some partial results, and will stick with the original problem with the special role of 2.

\begin{theorem}Consider a sequence of probability measures of the random variable $X$ such that $A$ and $\Delta$ are asymptotically independent in the sense that the conditional law of $\Delta$ given $A$ converges weakly to Bernoulli(1/2). Then the total variation distance between the laws of $\log_2(X)$ and $1+\log_2(X)$, which is of course equal to the total variation distance between the laws of $X$ and $2X$, converges to zero. 

Conversely, convergence of the total variation distance between the laws of $X$ and $2X$ to zero, implies the asymptotic independence of $A$ and $\Delta$.
\end{theorem}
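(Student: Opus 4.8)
The plan is to reduce both directions to a single exact identity that expresses the total variation distance in terms of the conditional probability $P(\Delta=1\mid A)$, after which the forward implication is the bounded convergence theorem and the converse is the implication from $L^1$ convergence to convergence in probability. First I would fix $n$ and set up the disintegration. Let $\mu$ denote the law of $X$, let $\nu$ denote the law of $2X$ (the image of $\mu$ under $x\mapsto 2x$), and let $\rho=\tfrac12(\mu+\nu)$, which is the marginal law of $A$. Since $\Delta$ is $\mathrm{Bernoulli}(1/2)$ independent of $X$, the joint law of $(A,\Delta)$ places mass $\tfrac12\mu$ on the slice $\{\Delta=0\}$ (where $A=X$ is the smaller amount) and mass $\tfrac12\nu$ on the slice $\{\Delta=1\}$ (where $A=2X$ is the larger). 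Writing $p(a):=P(\Delta=1\mid A=a)$, disintegration gives $\tfrac12\,d\nu=p\,d\rho$ and $\tfrac12\,d\mu=(1-p)\,d\rho$, so that $d\mu/d\rho=2(1-p)$ and $d\nu/d\rho=2p$. As both $\mu$ and $\nu$ are absolutely continuous with respect to $\rho$, I can then compute
$$
d_{TV}(\mu,\nu)=\tfrac12\int\Bigl|\tfrac{d\mu}{d\rho}-\tfrac{d\nu}{d\rho}\Bigr|\,d\rho
=\int|1-2p|\,d\rho=E\bigl[\,|1-2p(A)|\,\bigr]=2\,E\bigl[\,|p(A)-\tfrac12|\,\bigr].
$$
This identity is the crux. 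It holds for every $n$ with no hypothesis whatsoever, and crucially it uses only the common dominating measure $\rho_n$, thereby sidestepping the fact that there is no single dominating measure along the sequence.

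With the identity in hand I would pin down the hypothesis. The conditional law of $\Delta$ given $A$ is the random probability measure $\mathrm{Bernoulli}(p_n(A))$, an element of $\mathcal P(\{0,1\})$, which I identify with $p_n(A)\in[0,1]$; its weak convergence to $\mathrm{Bernoulli}(1/2)$ means exactly that $p_n(A)$ converges in distribution to $1/2$, and, the limit being a constant, this is equivalent to $p_n(A)\to 1/2$ in probability (throughout, $A\sim\rho_n$). For the forward direction, $|p_n(A)-\tfrac12|\le\tfrac12$ is bounded and tends to $0$ in probability, so bounded convergence gives $E|p_n(A)-\tfrac12|\to0$, whence $d_{TV}(\mu_n,\nu_n)=2E|p_n(A)-\tfrac12|\to0$. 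For the converse, $d_{TV}(\mu_n,\nu_n)\to0$ reads $E|p_n(A)-\tfrac12|\to0$, i.e. $p_n(A)\to\tfrac12$ in $L^1$ and hence in probability, which is precisely the asserted asymptotic independence. The claimed equality $d_{TV}(X,2X)=d_{TV}(\log_2 X,\,1+\log_2 X)$ is immediate, since $x\mapsto\log_2 x$ is a bimeasurable bijection of the positive half line onto the real line, $\log_2(2X)=1+\log_2 X$, and total variation is invariant under such bijections.

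I expect the only genuine obstacle to be conceptual rather than computational: making rigorous the phrase \emph{``the conditional law of $\Delta$ given $A$ converges weakly to $\mathrm{Bernoulli}(1/2)$.''} One must commit to reading it as convergence in distribution of the random measure, equivalently of $p_n(A)$ under $A\sim\rho_n$, and then invoke the standard fact that convergence in distribution to a constant coincides with convergence in probability. Once that reading is fixed, the two implications become perfectly symmetric---each is simply the statement that $E|p_n(A)-\tfrac12|\to0$---and each follows in a single line from the displayed identity, using bounded convergence in one direction and the elementary $L^1\Rightarrow$ in-probability implication in the other.
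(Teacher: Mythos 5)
Your proof is correct, and it takes a genuinely different route from the paper's. The paper proves the forward implication by an explicit inequality chain: fixing $\epsilon$ and $\delta$, it bounds the ratio $P(\Delta=0\mid A)/P(\Delta=1\mid A)$ by $c=(1+2\epsilon)/(1-2\epsilon)$ outside an event of probability $\delta$, and deduces $|P(Z\in E)-P(Z+1\in E)|\le 2\delta+4\epsilon/(1-2\epsilon)$ uniformly over sets $E$, where $Z=\log_2 X$; the converse is not proved at all but left to the reader, with a hint to test the sets $E_+=\{a: P(\Delta=1\mid A=a)>1/2+\epsilon\}$ and $E_-=\{a: P(\Delta=1\mid A=a)<1/2-\epsilon\}$. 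You instead establish the exact identity $d_{TV}(\mu_n,\nu_n)=2\,E|p_n(A)-\frac12|$ by disintegrating the joint law of $(A,\Delta)$ with respect to the common dominating measure $\rho_n=\frac12(\mu_n+\nu_n)$, after which each implication is a one-line consequence: bounded convergence (valid under convergence in probability) one way, and the elementary $L^1$-implies-in-probability step the other way. Both arguments read the hypothesis identically --- the paper's formalization $P(|P(\Delta=1\mid A)-\frac12|>\epsilon)\le\delta$ eventually is exactly your ``convergence in probability of $p_n(A)$ to $1/2$'' --- so there is no discrepancy in interpretation. What your approach buys is symmetry (the two directions collapse into a single equivalence, so the converse the paper only sketches comes for free), a quantitative sharpening (an exact equality between the total variation distance and the $L^1$ deviation of the conditional probability, rather than the paper's one-sided estimate), and a cleaner handling of the absence of any fixed dominating measure along the sequence, since $\rho_n$ is re-chosen for each $n$. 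The paper's estimates look more elementary, but they manipulate the same conditional probabilities (hence the same Radon--Nikodym derivatives) and yield strictly less information.
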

\begin{corollary}$\sup_k P(\lfloor \log_2 X\rfloor =k)\to 0$.
\end{corollary}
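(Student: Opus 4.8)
The plan is to read the corollary off the conclusion of Theorem~3 with essentially no new analytic input: along the sequence of laws under consideration (those for which $A$ and $\Delta$ are asymptotically independent), Theorem~3 tells us that the total variation distance between the laws of $Z := \log_2 X$ and of $Z+1$ tends to $0$. Writing $N := \lfloor Z\rfloor = \lfloor \log_2 X\rfloor$ and $p_k := P(N=k) = P(Z \in [k,k+1))$, the quantity I must drive to zero is $\sup_k p_k$.

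First I would reduce the statement about the real variable $Z$ to one about the integer-valued variable $N$. Since $\lfloor z+1\rfloor = \lfloor z\rfloor + 1$, the floor map pushes $\mathcal L(Z)$ and $\mathcal L(Z+1)$ forward to $\mathcal L(N)$ and $\mathcal L(N+1)$, and total variation distance never increases under a measurable map, so
$$d_{TV}\bigl(\mathcal L(N),\mathcal L(N+1)\bigr)\ \le\ d_{TV}\bigl(\mathcal L(Z),\mathcal L(Z+1)\bigr)\ \to\ 0.$$
For discrete laws the left-hand side equals $\tfrac12\sum_k|p_k-p_{k-1}|$ (using $P(N+1=k)=p_{k-1}$). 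The point of this reduction is precisely that it converts the supremum-over-sets form of total variation into the \emph{summable} control $\varepsilon_n := \sum_k|p_k-p_{k-1}| \to 0$, which is what the next step needs.

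The heart of the argument is then a one-line contradiction reflecting the nonexistence of a proper shift-invariant probability law on $\mathbb Z$. Suppose $\sup_k p_k\not\to0$; passing to a subsequence there are indices $k_n$ and a $\delta>0$ with $p_{k_n}\ge\delta$. Telescoping from the spike and bounding the increments by their total gives, for every $j\ge0$,
$$p_{k_n+j}\ \ge\ p_{k_n}-\sum_{i\ge1}\bigl|p_{k_n+i}-p_{k_n+i-1}\bigr|\ \ge\ \delta-\varepsilon_n.$$
Once $n$ is large enough that $\varepsilon_n\le\delta/2$, this forces $p_{k_n+j}\ge\delta/2$ for \emph{all} $j\ge0$, so that $\sum_{j\ge0}p_{k_n+j}=\infty$, contradicting $\sum_k p_k=1$. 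Hence $\sup_k p_k\to0$.

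I expect the only genuinely substantive step to be the passage from ``consecutive masses differ little'' to ``no single mass can survive'': this is the discrete, finite-total shadow of the improper uniform law on the powers of $2$ that Theorem~3 is circling around, and it is exactly where finiteness of total mass does the work. The data-processing reduction to $N$ is the small trick that makes this clean, turning a bound on a supremum of set-probabilities into the summable increment control that powers the telescoping estimate; the remaining manipulations are routine bookkeeping.
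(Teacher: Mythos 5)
Your proof is correct, and while it rests on the same basic tension as the paper's own proof---near invariance of the law of $\lfloor Z\rfloor$ under integer shifts is incompatible with a surviving atom, because unit total mass cannot be spread nearly equally over arbitrarily many translates---the mechanics are genuinely different. The paper takes the maximizing index $k_0$, applies Theorem 3 repeatedly to conclude that $P(\lfloor Z\rfloor =k_0), P(\lfloor Z\rfloor +1=k_0),\dots,P(\lfloor Z\rfloor +m=k_0)$ are asymptotically equal, and, since these are probabilities of $m+1$ disjoint events, obtains $\limsup \sup_k p_k\le 1/(m+1)$ for every fixed $m$, then lets $m\to\infty$. You instead push both laws forward to $\mathbb{Z}$ (data processing), rewrite the discrete total variation distance as the summable increment control $\varepsilon_n=\sum_k|p_k-p_{k-1}|\to0$, and telescope from a hypothetical spike to contradict $\sum_k p_k=1$. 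The difference is not merely cosmetic: the paper uses only that each individual difference $|p_k-p_{k-j}|$ is small (which is all the fixed-$m$ pigeonholing needs), whereas you exploit the full $\ell^1$ sum of consecutive differences. That stronger use of the hypothesis buys a sharper quantitative conclusion: your telescoping step, run directly rather than by contradiction, shows $\sup_k p_k\le\varepsilon_n\le 2\,d_{\mathrm{TV}}\bigl(\mathcal{L}(Z),\mathcal{L}(Z+1)\bigr)$, a bound linear in the total variation distance, while optimizing the paper's implicit estimate $\sup_k p_k\le 1/(m+1)+(m/2)\,d_{\mathrm{TV}}$ over $m$ yields only a bound of order $\sqrt{d_{\mathrm{TV}}}$. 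For the corollary as stated either route is ample; if you wished, you could drop the subsequence-and-contradiction wrapper and state the linear inequality outright, which would make the proof a two-line computation.
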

\begin{corollary} The distance between any two (different) quantiles of the law of $X$ converges to infinity.
\end{corollary}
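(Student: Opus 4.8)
The plan is to pass to the logarithmic scale, where the whole problem becomes a statement about a measure and its unit shift, and to isolate one exact identity that drives both halves of the theorem. Write $L=\log_2 X$ and let $\mu_n$ denote the law of $L$ for the $n$-th measure in the sequence, and $\mu_n^{(1)}$ the law of $L+1$ (the shift of $\mu_n$ by $+1$). Since $A=X\cdot 2^{\Delta}$ with $\Delta$ Bernoulli$(1/2)$ independent of $X$, we have $\log_2 A=L+\Delta$, so the law of $\log_2 A$ is $\nu_n=\tfrac12(\mu_n+\mu_n^{(1)})$, and the conditional probability $p_n:=P(\Delta=1\mid A)$ is exactly the Radon--Nikodym derivative $d(\tfrac12\mu_n^{(1)})/d\nu_n$. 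Total variation distance is invariant under the strictly monotone map $\log_2$, so $\|\mu_n-\mu_n^{(1)}\|_{\mathrm{TV}}$ is precisely what the theorem calls the distance between the laws of $X$ and $2X$.

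First I would prove the engine identity. Dominating by $\lambda=\mu_n+\mu_n^{(1)}$ and writing $p=d\mu_n/d\lambda$, $q=d\mu_n^{(1)}/d\lambda$ (so $p+q=1$ and $\nu_n=\tfrac12\lambda$), the derivative computation gives $p_n=q$, hence $p_n-\tfrac12=\tfrac12(q-p)$; integrating against $\nu_n=\tfrac12\lambda$ yields
$$E_{\nu_n}\bigl|\,p_n(A)-\tfrac12\,\bigr|=\tfrac14\int|p-q|\,d\lambda=\tfrac12\,\|\mu_n-\mu_n^{(1)}\|_{\mathrm{TV}}.$$
With this in hand both directions are immediate. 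The variable $p_n(A)$ lives in $[0,1]$, so its weak convergence to the constant $1/2$ (which is what convergence of the conditional law of $\Delta$ given $A$ to Bernoulli$(1/2)$ means) is equivalent to $p_n(A)\to 1/2$ in probability, and, being bounded, equivalent to $E_{\nu_n}|p_n(A)-1/2|\to 0$. By the identity this is equivalent to $\|\mu_n-\mu_n^{(1)}\|_{\mathrm{TV}}\to 0$, i.e. to $\mathrm{TV}(X,2X)\to 0$, which is exactly the asserted equivalence.

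For Corollary 1, set $a_k=P(\lfloor\log_2 X\rfloor=k)=\mu_n([k,k+1))$ and $\varepsilon_n=\|\mu_n-\mu_n^{(1)}\|_{\mathrm{TV}}$. Since $\mu_n^{(1)}([k,k+1))=a_{k-1}$, testing the total variation on the single interval $[k,k+1)$ gives $|a_k-a_{k-1}|\le\varepsilon_n$, so consecutive masses differ by at most $\varepsilon_n$. If $m=\sup_k a_k=a_{k_0}$, then $a_{k_0\pm j}\ge m-j\varepsilon_n$ for all relevant $j$; summing this ``triangle'' of masses, whose total is at most $1$, forces $m$ to be of order $\sqrt{\varepsilon_n}$, whence $m\to 0$. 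Corollary 2 then follows by transferring this spreading to quantiles: any window of length $T$ on the $L$-axis carries mass at most $(\lceil T\rceil+1)\sup_k a_k$, so a window capturing a fixed probability $\beta-\alpha>0$ lying between two distinct quantiles must have length tending to infinity; exponentiating, the ratio (equivalently the log-distance) of the corresponding quantiles of $X$ tends to infinity.

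I expect the only real obstacle to be the bookkeeping around the mode of convergence: one must state precisely that ``the conditional law of $\Delta$ given $A$ converges weakly to Bernoulli$(1/2)$'' means convergence of the random measure Bernoulli$(p_n(A))$ to a constant, and then justify, under the varying laws $\nu_n$, the equivalence of weak convergence to a constant, convergence in probability, and $L^1$ convergence for the uniformly bounded sequence $p_n(A)$. Once that equivalence is nailed down the identity makes the theorem a one-line consequence, and the corollaries are purely elementary spreading estimates. I would also flag that in Corollary 2 the natural notion of distance is the multiplicative one (the ratio of quantiles), since a purely additive gap can collapse if the whole law drifts toward $0$.
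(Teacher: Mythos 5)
Your deduction of the corollary itself coincides with the paper's: the paper's entire proof is the sentence that it ``is obvious from Corollary 1'' that quantile gaps must diverge, and your windowing estimate --- a window of length $T$ on the $\log_2$ scale carries mass at most $(\lceil T\rceil+1)\sup_k P(\lfloor\log_2 X\rfloor=k)$, so an interval trapping the fixed mass $\beta-\alpha$ between two distinct quantiles must lengthen without bound --- is exactly the detail being waved away there. Two points of comparison are worth recording. First, your closing caveat is correct and matches the paper's actual usage: the statement literally says quantiles of the law of $X$, but an additive gap can collapse if the mass drifts toward $0$; the paper's own one-line proof silently switches to quantiles of $Z=\log_2 X$, and its proof of Corollary 3 uses precisely the log-scale form ($z_{\epsilon/2}-z_\epsilon\to\infty$), i.e.\ divergence of ratios of quantiles of $X$, which is what you prove. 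Second, the machinery you rebuilt upstream takes genuinely different routes: you prove the theorem via the exact identity $E\left|P(\Delta=1\mid A)-\tfrac12\right|=\tfrac12\,\mathrm{TV}\bigl(\mathrm{law}(X),\mathrm{law}(2X)\bigr)$ together with the equivalence of weak, in-probability and $L^1$ convergence for the bounded variable $P(\Delta=1\mid A)$, where the paper runs an $\epsilon$--$\delta$ estimate over arbitrary events; and you get Corollary 1 from the Lipschitz bound $|a_k-a_{k-1}|\le\varepsilon_n$ plus a triangle summation giving $\sup_k a_k=O(\sqrt{\varepsilon_n})$, where the paper argues qualitatively by applying the theorem $m$ times to get $\limsup\le 1/(m+1)$ for every $m$. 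Your route is longer but delivers explicit rates (and the identity makes the theorem's equivalence transparent); the paper's is shorter and purely qualitative, which suffices for the corollaries.
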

\begin{corollary}For all $\delta>0$, $P(X<\delta E(X))\to 1$.
\end{corollary}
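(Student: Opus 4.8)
The plan is to prove the equivalent statement $P(X \ge \delta E(X)) \to 0$ for each fixed $\delta>0$, whence $P(X < \delta E(X)) \to 1$. If $E(X)=\infty$ the claim is immediate, since then $\{X<\delta E(X)\}$ has probability one for every $\delta>0$; so I may assume $E(X)<\infty$. The key point is that the \emph{entire} argument is driven by Corollary 1 (that $\eta := \sup_k P(\lfloor \log_2 X\rfloor = k)\to 0$ along the sequence): no finer information about the approach to the improper prior is needed.

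First I would pass to the dyadic level variable $L=\lfloor \log_2 X\rfloor$ with masses $p_k = P(L=k)$, and set $S=\sum_k 2^k p_k$. Since $X\ge 2^k$ on $\{L=k\}$, integrating gives $E(X)\ge S$, so $\delta E(X)\ge \delta S$ and therefore $P(X\ge \delta E(X))\le P(X\ge \delta S)$. Writing $2^K = S$, the event $\{X\ge \delta S\}$ equals $\{\log_2 X \ge K+\log_2\delta\}$, which forces $L\ge K+\log_2\delta-1$. Hence it suffices to control the upper-tail mass $Q := \sum_{k\ge K+c} p_k$ with $c=\log_2\delta-1$, and to show $Q\to 0$.

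The heart of the matter is a knapsack-type estimate. Among the levels $k\ge K+c$ carrying the mass $Q$, each mass is capped by $\eta$, while the weighted total obeys the budget $\sum_{k\ge K+c} 2^k p_k \le S = 2^K$. To make $Q$ as large as possible under this budget one fills the cheapest (smallest $2^k$) levels first, each at the cap $\eta$: placing the cap on $t$ consecutive levels starting no lower than $2^{K+c}$ costs at least $\eta\,2^{K+c}(2^t-1)$, which may not exceed $2^K$. This yields $2^t\le 2^{-c}/\eta + 1$, and hence $Q \le \eta\bigl(\log_2(2^{-c}/\eta+1)+1\bigr)$. Since $c=\log_2\delta-1$ is fixed ($2^{-c}=2/\delta$), the right-hand side is of order $\eta\log_2(1/\eta)$, which tends to $0$ as $\eta\to0$. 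Invoking Corollary 1 then gives $Q\to0$, and the result follows.

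The main obstacle is precisely that the naive Markov bound is useless here: it only gives $P(X\ge\delta E(X))\le 1/\delta$, a constant that never vanishes and is vacuous for small $\delta$. The genuine content is that $E(X)$ is dominated by the extreme upper tail while, because the atoms $p_k$ are uniformly small, that tail carries vanishing probability; the quantitative capture of this is exactly the knapsack bound $Q\lesssim \eta\log_2(1/\eta)$, and recognizing $\eta=\sup_k p_k\to0$ (Corollary 1) as the single relevant hypothesis is what keeps the argument short. The only cosmetic subtlety is the rounding when translating $\{\log_2 X\ge K+\log_2\delta\}$ into a condition on the integer level $L$, but this merely shifts $c$ by a bounded amount and does not affect the limit.
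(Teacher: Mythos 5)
Your proof is correct, but it takes a genuinely different route from the paper's. The paper derives this corollary from Corollary 2 (divergence of quantile gaps): fixing $\epsilon>0$, it notes that $P(X\le 2^{z_\epsilon})>1-\epsilon$, lower-bounds the mean by the mass above the $\epsilon/2$-quantile alone, $E(X)\ge \tfrac{\epsilon}{2}2^{z_{\epsilon/2}}$, and then invokes $z_{\epsilon/2}-z_\epsilon\to\infty$ to conclude that $\delta E(X)$ eventually exceeds $2^{z_\epsilon}$; this follows the chain announced in Remark~1 (Theorem $\Rightarrow$ Cor.~1 $\Rightarrow$ Cor.~2 $\Rightarrow$ Cor.~3). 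You instead bypass Corollary 2 entirely and go straight from Corollary 1, capping the dyadic level masses $p_k$ by $\eta=\sup_k p_k$ and using a fractional-knapsack argument against the budget $\sum_k 2^k p_k\le E(X)$ to bound the tail mass above $\delta E(X)$ by $O\bigl(\eta\log_2(1/\eta)\bigr)$. Both arguments are sound. What each buys: the paper's proof is softer and shorter given that Corollary 2 is already in hand, and it needs no case split on $E(X)<\infty$; yours is quantitative --- it produces an explicit rate in terms of $\eta$, and it shows that Corollary 2 is not actually needed as an intermediary, which is a small structural improvement. One stylistic remark: the greedy step (``fill the cheapest levels first'') is the standard fractional-knapsack optimality and is fine, but if you want it airtight without appealing to that, split the tail at level $m+t$ where $m=\lceil K+c\rceil$ and bound
$$Q\le t\eta+2^{-(m+t)}\sum_{k\ge m+t}2^k p_k\le t\eta+2^{K-m-t},$$
then choose $t=\lceil\log_2\bigl(2^{K-m}/\eta\bigr)\rceil$ to get $Q\le\eta\bigl(\log_2(2^{-c}/\eta)+2\bigr)$, the same order as your bound.
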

\begin{conjecture} $A$ and $\Delta$ are asymptotically independent if and only if fractional and whole parts of $\log_2 X$ are asymptotically independent, with the whole part asymptotically uniformly distributed over all integers.
\end{conjecture}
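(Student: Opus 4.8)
The plan is to reduce both sides of the claimed equivalence to the total-variation language of Theorem 3 and then to treat the two implications separately. Write $L=\log_2 X$ and split it into whole and fractional parts $W=\lfloor L\rfloor$ and $F=\{L\}$. Set $p_k=P(W=k)$, let $\nu_k$ be the conditional law of $F$ given $W=k$ (a probability measure on $[0,1)$), and put $m_k=p_k\,\nu_k$, so that $\sum_k\|m_k\|_{\mathrm{TV}}=\sum_k p_k=1$ and the marginal law of $F$ is $\nu=\sum_k m_k$. The unit shift $L\mapsto L+1$ replaces $m_k$ by $m_{k-1}$ on the block $[k,k+1)$, so Theorem 3 recasts the left-hand side as the statement
\[
d_{\mathrm{TV}}\bigl(\mathrm{law}(L),\mathrm{law}(L+1)\bigr)=\tfrac12\sum_k\|m_k-m_{k-1}\|_{\mathrm{TV}}\longrightarrow 0.
\]
On the right-hand side I read ``whole part asymptotically uniform'' as asymptotic shift-invariance of its law, $\sum_k|p_k-p_{k-1}|\to 0$ (which forces the mass to spread out, since $\sup_k p_k\le\sum_k|p_k-p_{k-1}|$), and ``fractional and whole parts asymptotically independent'' as $\sum_k p_k\|\nu_k-\nu\|_{\mathrm{TV}}=2\,d_{\mathrm{TV}}\bigl(\mathrm{law}(W,F),\mathrm{law}(W)\otimes\mathrm{law}(F)\bigr)\to 0$.

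The ``if'' direction I would settle by inserting the common marginal $\nu$ and using the triangle inequality, $\|m_k-m_{k-1}\|_{\mathrm{TV}}\le p_k\|\nu_k-\nu\|_{\mathrm{TV}}+|p_k-p_{k-1}|+p_{k-1}\|\nu_{k-1}-\nu\|_{\mathrm{TV}}$; summing and reindexing bounds $\sum_k\|m_k-m_{k-1}\|_{\mathrm{TV}}$ by a constant multiple of the independence defect plus the shift-invariance defect, both of which vanish by hypothesis. Half of the ``only if'' direction is equally clean: the reverse triangle inequality gives $|p_k-p_{k-1}|=\bigl|\,\|m_k\|_{\mathrm{TV}}-\|m_{k-1}\|_{\mathrm{TV}}\,\bigr|\le\|m_k-m_{k-1}\|_{\mathrm{TV}}$, so asymptotic independence of $A,\Delta$ already forces $\sum_k|p_k-p_{k-1}|\to0$, i.e. the whole part is asymptotically uniform.

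The main obstacle, and the reason this is flagged as a conjecture rather than a theorem, is the remaining implication: that asymptotic independence of $A,\Delta$ forces $F$ and $W$ to be asymptotically independent. I expect the literal reading with the global total-variation notion above to be \emph{false}, and the proposal is to exhibit this first. The quantity $\sum_k\|m_k-m_{k-1}\|_{\mathrm{TV}}$ controls only \emph{consecutive} shape differences $\nu_k\to\nu_{k-1}$; it lets the conditional shape drift slowly but without bound as $k$ ranges over the growing support of $W$. A concrete witness is $p_k\equiv 1/n$ on $\{1,\dots,n\}$ with $\nu_k$ uniform on the half-interval $[k/2n,\,k/2n+1/2)$: consecutive shapes differ by only $O(1/n)$ in total variation, so $\sum_k\|m_k-m_{k-1}\|_{\mathrm{TV}}=O(1/n)\to0$ and $A,\Delta$ are asymptotically independent, yet the conditional laws sweep out the whole circle, the marginal $\nu$ is a fixed tent density, and $\sum_k p_k\|\nu_k-\nu\|_{\mathrm{TV}}$ stays bounded away from $0$.

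Consequently the honest target is a \emph{local} form of asymptotic independence, and the bookkeeping above already isolates the right quantity: the ``shape-change'' defect $\sum_k\bigl(\|m_k-m_{k-1}\|_{\mathrm{TV}}-|p_k-p_{k-1}|\bigr)$, which measures how much the conditional law of $F$ moves when $W$ is incremented by one, and which vanishes precisely when $A,\Delta$ are asymptotically independent together with $W$ asymptotically uniform. The hard part of a full proof is to recast this local defect as a genuine probabilistic independence statement — plausibly asymptotic independence of $F$ from the \emph{increment} of $W$ rather than from $W$ itself — and to find the extra regularity (monotone or boundedly-varying $\nu_k$, or controlled support) under which the global independence in the conjecture as literally worded can be recovered. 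A natural tool for the remaining direction is the mode-by-mode estimate $\sum_k|\widehat m_k(\ell)-\widehat m_{k-1}(\ell)|\le\sum_k\|m_k-m_{k-1}\|_{\mathrm{TV}}\to0$ for the Fourier coefficients $\widehat m_k(\ell)=\int_{[0,1)} e^{2\pi i\ell x}\,dm_k(x)$, since the shift acts diagonally on them; but this yields only weak, frequency-wise control of the shapes, which is exactly consistent with the sliding-bump example and explains why the global total-variation separation demanded by the conjecture can fail.
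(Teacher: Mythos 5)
You should know at the outset that the paper contains no proof of this statement: it is left as an open conjecture, Remark~2 explicitly declares it ``ill-posed'', and the Conclusions admit the author ``did not succeed'' in proving it for lack of a formal framework. So there is no paper argument to match yours against; your attempt goes strictly beyond the paper, and what it delivers is, as far as I can check, correct. The block decomposition $m_k=p_k\nu_k$, the identity $d_{\mathrm{TV}}(\mathrm{law}(L),\mathrm{law}(L+1))=\tfrac12\sum_k\|m_k-m_{k-1}\|_{\mathrm{TV}}$, the triangle-inequality proof of the ``if'' direction, and the reverse-triangle bound $\sum_k|p_k-p_{k-1}|\le\sum_k\|m_k-m_{k-1}\|_{\mathrm{TV}}$ all hold. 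Your sliding-bump counterexample also checks out numerically: consecutive half-interval densities (height $2$, offset $1/2n$) give $\|\nu_k-\nu_{k-1}\|_{\mathrm{TV}}=2/n$, hence $\sum_k\|m_k-m_{k-1}\|_{\mathrm{TV}}=O(1/n)$ including the two boundary blocks, so $A$ and $\Delta$ are asymptotically independent by the theorem's reverse implication; meanwhile $\|\nu_k-\nu\|_{\mathrm{TV}}$ is of constant order (it equals $1/2$ for $k\approx n/2$ against the limiting tent density), so the independence defect stays bounded away from $0$. Moreover your refutation is robust to the formalization: since $E(F\mid W=k)=k/2n+1/4$ wanders across $(1/4,3/4)$ while $E(F)\approx1/2$, the fractional and whole parts fail to be asymptotically independent under \emph{any} reasonable weakening, not just the total-variation one. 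So you have not merely illustrated Remark~2; you have sharpened it, showing the ``only if'' implication is genuinely false as worded and that any correct version must be local in the sense you describe. (A small slip: your conditional laws sweep half the circle, not the whole circle; this is immaterial.)

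One caveat deserves emphasis, because it is where your formalization makes a real choice. You read ``whole part asymptotically uniformly distributed over all integers'' as asymptotic shift-invariance, $\sum_k|p_k-p_{k-1}|\to0$. Under the weaker ``spreading'' reading $\sup_k p_k\to0$ --- which is all that the paper's Corollary~1 actually establishes --- the ``if'' direction of the conjecture would also fail: take $X$ uniform on $\{4^k:k=1,\dots,n\}$, so that fractional and whole parts are \emph{exactly} independent and the whole part spreads out, yet the laws of $X$ and $2X$ are mutually singular and $P(\Delta=1\mid A)\in\{0,1\}$ almost surely. Your shift-invariance reading is the right one precisely because it is the asymptotic analogue of the paper's exact (improper) characterization --- invariance of the law of $\log_2 X$ under integer shifts --- and because it makes the ``if'' direction true; but you should state explicitly that this choice is doing work, since under the other natural reading \emph{both} directions of the conjecture fail. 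With that made explicit, your note would stand as a genuine resolution (negative, with a repaired local statement) of a question the paper leaves open.
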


\noindent \textbf{Examples}. Suppose $X$ is continuously uniformly distributed on the interval $[1, N]$. The joint probability distribution of $A$ and $\Delta$ can be visualised in the $(a, \delta)$ plane as mass one half, spread uniformly over the line segment $1 < a < N$, $\delta = 0$, and mass one half spread uniformly over  the line segment $2 < a < 2N$, $\delta = 1$. By a careful but elementary computation one can see that for $a \in [2, N]$, the conditional probability that $A < B$ given $A = a$ is exactly equal to 2/3. Outside that interval it is equal to 0 or 1. As $N$ increases the probability of the event $A \in [2, N]$ converges to 3/4. So $A$ and $\Delta$ are \emph{not} asymptotically independent. The variation distance between the laws of $X$ and $2X$ converges to 3/4. Theorem 2 does not apply, though the statement of the first corollary is true, and hence also of the next two.  On the other hand, if we take $\log_2 X$ continuously uniformly distributed on $[0,N]$, then the asymptotic independence does hold and hence the theorem applies, and also its corollaries. If we replace the continuous uniform distributions by the discrete, the same things can be said. All this is consistent with Conjecture 1. 

\medskip

\noindent \textbf{Remark 1}. \emph{Corollary 3} is going to be used to resolve the (still to be introduced) TEP-2 paradox. As the proof will show, Corollary 3 is a corollary of Corollary 2, which follows from Corollary 1, which follows from the theorem (forwards implication).  

\noindent \textbf{Remark 2}. \emph{Conjecture 1} as it stands is ill-posed. Part of the problem is to extend probability theory and then weak convergence theory to include improper prior distributions and allow them to arise as ``weak limits'' in the new, appropriate sense. The first thing to do is to study more examples.

\medskip

\noindent \textbf{Proof of Theorem 3, forward implication}. To say that the conditional law of $\Delta$ given $A$ converges weakly to the constant law Bernoulli(1/2) means precisely that for any $\epsilon>0$ and $\delta$ there exists an $N_0(\epsilon,\delta)$ such that for all $N\ge N_0$, $P(\left|P(\Delta=1\mid A)-\frac 12\right|>\epsilon)\le\delta$. Recall that everything is defined here through the law of $X$ which is supposed to depend on $N$. For all $N$, $\Delta$ is independent of $X$ and Bernoulli(1/2), and $A=X$ if $\Delta=0$, $A=2X$ if $\Delta=1$. Now if $\left|P(\Delta=1\mid A)-\frac 12\right|\le \epsilon$ then $P(\Delta=0\mid A)/P(\Delta=1\mid A)\le (1+2\epsilon)/(1-2\epsilon)=c$, say. Define $Z=\log_2 X$, let $\mathbb 1$ denote an indicator random variable. We have for all $E$,
$$P(Z\in E)=2P(\log_2 A \in E,\Delta=0)\le2\Biggl(\delta+P\bigl(\log_2 A\in E, \Delta=0,\frac{P(\Delta=0\mid A)}{P(\Delta=1\mid A)}\le c\bigr)\Biggr)$$
$$\le 2\delta + 2E\Biggl(P(\Delta=0\mid A)\mathbb 1\{\log_2 A\in E, \frac{P(\Delta=0\mid A)}{P(\Delta=1\mid A)}\le c)\}\Biggr)$$
$$\le 2\delta + 2cE\Biggl(P(\Delta=1\mid A)\mathbb 1\{\log_2 A\in E, \frac{P(\Delta=0\mid A)}{P(\Delta=1\mid A)}\le c)\}\Biggr)$$
$$\le 2\delta + 2cE\Biggl(P(\Delta=1\mid A)\mathbb 1\{\log_2 A\in E\}\Biggr)$$
$$\le 2\delta + 2c P(\log_2 A \in E, \Delta=1)$$
$$=2\delta + 2c P(Z+1 \in E, \Delta=1)$$
$$=2\delta+\frac{1+2\epsilon}{1-2\epsilon}P(Z+1\in E).$$ It follows that
$$P(Z\in E)-P(Z+1\in E) \le 2\delta+4\epsilon/(1-2\epsilon).$$ On the other hand, reversing the roles of the events $\{\Delta=0\}$ and $\{\Delta=1\}$, and starting from the identity $P(Z+1\in E)=2P(\log_2 A\in E,\Delta=1)$, we obtain in exactly the same way
$$P(Z+1\in E)-P(Z\in E) \le 2\delta+4\epsilon/(1-2\epsilon).$$
Since $E$ was arbitrary this proves the claim that the total variation distance between the laws of $Z$ and of $Z+1$ converges to zero.
\qed

\noindent \textbf{Proof of Theorem 3, reverse implication}. This proof is left to the reader. It requires careful choice of two different sets $E$, for instance, $E_+=\{a: P(\Delta=1\mid A=a) > 1/2+\epsilon\}$ for some $\epsilon>0$, and $E_-=\{a: P(\Delta=1\mid A=a) < 1/2-\epsilon\}$ . \qed

\medskip
\noindent \textbf{Proof of Corollary 1}. If $k_0$ maximizes $P(\lfloor Z\rfloor =k)$ then applying the theorem $m$ times we have the asymptotic equality of $P(\lfloor Z\rfloor =k_0),P(\lfloor Z\rfloor +1=k_0),...P(\lfloor Z\rfloor +m=k_0)$. This implies that $\lim\sup P(\lfloor Z\rfloor =k_0)\le1/(m+1)$. Since $m$ was arbitrary,  it follows that $\max_kP(\lfloor Z\rfloor =k)\to 0$.
\qed

\medskip
\noindent \textbf{Proof of Corollary 2}. It is obvious from Corollary 1, that the distance between two fixed (distinct) quantiles of the distribution of $Z$ must diverge as $N\to\infty$.
\qed

\medskip
\noindent \textbf{Proof of Corollary 3}. Let $z_\alpha$ denote the upper $\alpha$-quantile of the law of $Z=\log_2 X$, defined by $P(Z\ge z_\alpha)\ge \alpha$, $P(Z> z_\alpha)< \alpha$. Fix $\epsilon>0$. On the one hand, $$P(X\le 2^{z_\epsilon})>1-\epsilon.$$
On the other hand, $$E(X)=E(2^Z)\ge\frac \epsilon 2 2^{z_{\epsilon/2}}=\frac\epsilon 2 2^{z_{\epsilon/2}-z_\epsilon}2^{z_\epsilon}.$$
Since $z_{\epsilon/2}-z_\epsilon\to\infty$, it follows that for sufficiently large $N$, $\delta E(X)>2^{z_\epsilon}$ and hence $P(X<\delta E(X))>1-\epsilon$.
\qed

\subsection{TEP-2 proper: Great Expectations}
Now for TEP-2 proper, and a shift to some issues much discussed in mathematical economics and decision theory. It was quickly observed that steps 6 and 7 can't both be correct if we restrict attention to $X$ having a proper probability distribution. (As I just explained, I consider that observation to be a cheap way to resolve the TEP-1). However, it also did not take long for many authors to discover probability distributions of $X$ such that $E(B\mid A=a)>a$ for all $a$, or more concisely, $E(B\mid A)>A$. Thus the paradox appears to be resurrected since there \emph{are} situations in which it appears rational to exchange envelopes without knowledge of the content of your envelope. Here is just one such example: let $X$ be 2 to the power of a geometrically distributed random variable with parameter $p=1/3$; to be precise, $P(X=2^n)=2^n/3^{n+1}$, $n=0,1,2...$. When $A=1$, with certainly $A<B$. For any other possible value of $A$ it turns out that $P(A<B\mid A)=3/5$ and $E(B\mid A)=11A/10>A$ except when $A=1$, when $E(B\mid A)=2>A$.

Equally quickly, it was noticed that such examples always had $E(X)=\infty$. This is necessary, since on taking expectation values again, it follows from $E(B\mid A)>A$ that $E(B)>E(A)$ ... or that $E(B)=E(A)=\infty$. But we know a priori (by symmetry) that $E(B)=E(A)$, and indeed $E(B)=E(A)=3E(X)/2$ since the expected amount in both envelopes together is $3E(X)$. Hence all such examples must indeed have $E(X)=\infty$.

Why does this observation resolve the paradox? Well, because if the expectation values of $A$ and $B$ are infinite, you will always be disappointed with what you get, on choosing and opening either envelope. As Keynes famously said, in the long run we are dead. Why are expectation values supposed to be interesting? Because they are supposed to approximate long run averages. But if the infinitely long run average is infinite, any finite average is disappointing. In the mathematical economics literature, as well as our probability distributions expressing our beliefs we have our utilities expressing our value to be assigned to any outcome. Standard economic theory assumes that utilities are bounded. That is supposed to keep paradoxes from the door.

Well, that is the point of view in mathematical economics. I think it is a somewhat cheap way out. In mathematical models it is often perfectly justified to use probability distributions with infinite ranges, and even with infinite expectation values, as convenient, realistic, legitimate mathematical approximations to real life distributions, even though some would insist that all ``real'' distributions actually have bounded support and definitely finite expectation value. The point is, that that point is irrelevant. The fields of mathematical finance, climatology, meteorology, geophysics abound with examples. (An example from theology is Pascal's wager, though some may dispute that that is a ``real world'' example). The important point is the fact that in the real world it is quite possible for averages of a number of independent observations of $X$ to be always far less than the mathematical expectation value of $X$ with overwhelming probability.  Take a distribution of $X$ on the positive real line with infinite expectation and leading to $E(B\mid A)>A$ and truncate it so far to the right that even a million independent observations from $X$ would hardly ever contain one observation exceeding the truncation value. Call the truncated distribution that of $X'$ and use it instead of $X$ to set up TEP-2. You'll find $E(B\mid A)>A$ with huge probability so step 8 suggests you should switch envelopes. But the gain is illusory, since this is a situation where the average of a huge number of copies of $X$ is still far smaller than their expectation value. Expectation value is no guide to decision, even though everything is as finite as you like.

Some philosophers working on the margins of the foundations of the theory of utility do write papers trying to set up a theory of utility which allows unbounded utilities, and use TEP-2 as a test case for such theories. For the reasons just expressed, I think they are barking up a completely wrong tree.

This is where I also return to my intermediate (between TEP-1 and TEP-2) resolution: the author was perhaps a Bayesian using a prior distribution perfectly appropriate to express almost complete lack of knowledge about $X$. Corollary 3 says that as she must admit to having a tiny bit of information, steps 6 and 7 are only approximately correct, not exactly, but now the resolution of the paradox is that in this situation the expectation value of $X$ is so far to the right of where the bulk of its probability distribution lies, that expectation values are no guide to action. It is step 8 which fails. This is a situation where Keynes has the last word. 

Back to TEP-1: since the writer is not working explicitly in a particular formal framework, we do not know what he or she is trying to do. There is not a unique resolution to the paradox of the type ``step so-and-so fails''. There is not a unique explanation of ``what went wrong''. Looking for one is illusory. Unless we take the higher point of view and say: the writer was trying to do probability theory but without knowing its concepts, let alone its rules, and he or she screwed up big time by not making distinctions which in probability theory are crucial to make. TEP-1 is the kind of reason that formal probability theory was invented. Philosophers who work on TEP-1 without knowing modern (elementary) probability are largely wasting their own time; at best they will reinvent the wheel. (One might argue that anybody who works on TEP-1 is wasting their own time).

\section{TEP-3}
Next we start analysing the situation when we do look in Envelope A before deciding whether to switch or stay. If there is a given probability distribution of $X$ this just becomes an exercise in Bayesian probability calculations. Typically there is a threshhold value above which we do not switch. But all kinds of strange things can happen. If a probability distribution of $X$ is not given we come to the randomized solution of \cite{cover1987pick} where we compare $A$ to a random ``probe''  of our own choosing.

Here is the problem, in Cover's words: \emph{Player 1 writes down any two distinct numbers on separate slips of paper. Player 2 randomly chooses one of these slips of paper and looks at the number. Player 2 must decide whether the number in his hand is the larger of the two numbers. He can be right with probability one-half, by just guessing. It seems absurd that he can do better}. 

Spoiler alert. How can he do better? (Cover does not give the answer, but he does know that there is one). Here it is. Player 2 picks a number with a positive probability density with respect to Lebesgue measure on the real line. For any non-empty interval, there is positive probability that it lies in that interval. Hence there is positive probability that it lies between the two numbers written down by Player 1. Now Player 2 uses his random number as surrogate for ``the other number''. He'll give the right answer when his own number is in between Player 1's numbers, but when his number is outside of the range of Player 1's two numbers, he guesses right with probability one half. His overall probability of getting it right is strictly larger than a half.

\section{TEP-0}

This is of course the ``TEP without probability'' of \cite{Smullyan2012satan}. \emph{Let the amount in the envelope chosen by the player be $A$. By swapping, the player may gain $A$ or lose $A/2$. So the potential gain is strictly greater than the potential loss. But let the amounts in the envelopes be $X$ and $2X$. Now by swapping, the player may gain $X$ or lose $X$. So the potential gain is equal to the potential loss}.

The short resolution is simply: the problem is using the same words (potential gain, loss) to describe different things. But different resolutions are possible depending on what one thinks was the intention of the writer. One can try to embed the argument(s) into counterfactual reasoning. Or one also can point out that the key information that Envelope A is chosen at random is not being used in Smullyan's arguments. So this is a problem in logic and this time an example of screwed up logic. Philosophers have lots of ways to clean up this particular mess.

\section{History}

So far I neglected to mention that TEP was a remake of the 1953 \emph{two-neckties} problem, \cite{kraitchik2006mathematical}, of Maurice Kraitchik (1882-1957), a Belgian mathematician and populariser of mathematics born in Minsk. An earlier (1943) edition of Kraitchik's book ``Mathematical recreations'' exists, I do not know if that one already contains the problem. \emph{Two men are each given a necktie by their respective wives as a Christmas present. Over drinks they start arguing over who has the cheaper necktie. They agree to have a wager over it. They will consult their wives and find out which necktie is more expensive. The terms of the bet are that the man with the more expensive necktie has to give it to the other as the prize. The first man reasons as follows: winning and losing are equally likely. If I lose, then I lose the value of my necktie. But if I win, then I win more than the value of my necktie. Therefore, the wager is to my advantage. The second man can consider the wager in exactly the same way; thus, paradoxically, it seems both men have the advantage in the bet. This is obviously not possible (assuming both prefer the more expensive necktie)}.

Kraitchik's main interests were the theory of numbers and recreational mathematics. The two neckties became \emph{two wallets} with \cite{gardner1982aha} and \emph{two envelopes} with \cite{zabell1988loss}, \cite{zabell1988symmetry}, \cite{nalebuff1988puzzles}, \cite{nalebuff1989puzzles} and \cite{gardner1997penrose}. Zabell gave the wide class of problems the name \emph{exchange paradox}. He explains that he heard of the problem from Steve Budrys of the Odesta corporation, and also that he discussed it with lots of other people. Nalebuff tells that he got it from Hal Varian who got it from Sandy Zabell. Zabell (a subjective Bayesian) starts with introducing a third player, Player C, who fills the two envelopes and gives one to Player A and one to Player B. We are not initially told that C does this ``at random''. Hence the other players' prior beliefs about Player C would certainly influence their own decisions. Zabell does go on to focus on the symmetric case that player C is known to be a neutral referee. Nalebuff focussed on a non-symmetric version now called the Ali and Baba problem. Since my focus is on the symmetric case I do not write out the (simple) details here. He neatly retains the paradox that both Ali and Baba, after \emph{imagining} looking in their envelopes, seem to have a good reason to want to switch with the other. A possible ancestry goes back to a problem proposed by Schr\"odinger, quoted in \cite{littlewood1986littlewood}. A highly disguised appearance of the paradox occurred in \cite{blackwell1951translation}. So in the movie paradigm, TEP is actually a \emph{remake} of an almost forgotten classic.

All the symmetric versions of the problem have exactly the same key feature and the same resolution: there is a pair of random variables $A$, $B$ whose distribution is invariant under exchange. They have positive probability to be different; on conditioning that they are different, we may pretend they are certainly different. Hence by our little Theorem 2 at the end of Section 1, the random variable $A$ cannot be independent of the event $\{A<B\}$, or equivalently, the event $\{A<B\}$ cannot be independent of the random variable $A$. Or ... there is an improper prior lurking behind the scenes, expectations are infinite, and exchange is futile.

\section{Conclusions}

Over the years, frequentist probabilists, Bayesian probabilists, logicians, philosophers, and mathematical economists, have all taken a too narrow view of TEP, blind to the existence of other scientific communities. Obviously, the present author is the first to step outside of the narrow confines of their own discipline! Since probability calculus was invented so as to provide a decent language to enable the world to \emph{move on} from problems like TEP, why do so many philosophers still insist on clumsy pre-probability ``solutions'' which are so vague as to be useless? But how come Martin Gardner couldn't solve TEP? And why did so many biggish names deduce that $X$ must have a uniform distribution on $(0,\infty)$, while in fact it's $\log X$ which must be uniform on $(-\infty,\infty)$, to preserve the validity of steps 6 and 7 (if  the special number ``2'' is made arbitrary)? Why did so many authors take a cheap way out to resolve the paradox? It's clear that many people find TEP \emph{irritating}. It is not a \emph{fun} problem like MHP (Monty Hall problem).

I hope this paper shows that there are both subtle and fascinating aspects to TEP and probably even some more interesting maths, if not philosophy, to be done. I did not succeed in showing that limiting independence of $A$ and $\Delta$ implied that $\lfloor \log_2 X \rfloor$ is asymptotically uniform and asymptotically independent of $\{\log_2 X \}$. I could not do this because I don't yet have a way to express formally what I want to prove, since in the limit I am outside of conventional probability theory.

There are certainly some important lessons to people who build probability models in the real world. One should be wary of infinities, but please let's be wary of them for the good reasons, not for non-reasons.

I think it helps a great deal to bear the Anna Karenina principle in mind, when tackling a logical paradox like TEP. Note that the TEP argument is informal. Steps are partly justified, but not fully justified. In order to ``point a finger'' at the mistake, the steps need to be amplified. But why should there only be one way to amplify the steps of the argument so as to fit in to some logical -- but failing -- argument? And why should the failed argument only fail at one step? The writer does not make explicit within which logical framework he is working. We neither know his assumptions nor his intention. Whatever they are, he must be making a mistake, since his conclusion is self-contradictory. But one cannot say that whatever the context and whatever the intention, the mistake is made at the same place. It is hard to be sure that there are no other reasonable contexts and intentions than those which have appeared so far in the literature. As the paradox evolved and migrated to new fields it mutated as well: from its humble origin in recreational mathematics (where it was invented by experts in number theory so as to confuse amateurs) it mutated and migrated to statistics, mathematical economics and to philosophy.

I also found the Anna Karenina principle very useful when arguing with researchers in the foundations of quantum mechanics, who believe that Bell's theorem is false. The theorem in question states that quantum mechanics is incompatible with ``local realism'' -- the world view of Einstein. The theorem, or a formal mathematical version of it, is clearly correct, and it has stood up to more than fifty years of intense scrutiny and much opposition. Again and again, very smart people come up with counterexamples. There is always a mistake in their counter-example, but they will always deny that that is a mistake. Like a persistent student, they will rewrite their manuscript adding new technical detail correcting the mistake they had made before, and hiding a new one buried deeper still in long computations. For some nice open probability problems in this field with distinctly geometric flavour, and not needing any knowledge of quantum mechanics, see \cite{gill2020}.

I find the analogy with the Aliens movie franchise also useful. TEP tells us how important it is to make distinctions. People who write about TEP should be careful to distinguish TEP-1 from the whole franchise. We have this whole franchise precisely because of the Anna Karenina principle. Anna Karenina meets Aliens on the back of a few envelopes. I am looking forward to new papers on TEP, if necessary shredding my own. Arrogance deserves to be punished.

The bibliography to this paper contains a list of all the papers I have studied while writing this one. Many are not cited in the body of this paper, but they have all influenced in one way or another the whole paper. Many of the books are listed with their date of original publication, but with the publisher which presently provides a ``second (or later) edition''. I first started working on this topic through getting involved in Wikipedia discussions, or perhaps one could better say, fights, which somewhat like many court cases (especially in civil law, but also in criminal law) were typically resolved in favour of editors who could recite at length from the Wikipedia rule book while blind drunk if not asleep. Logic or truth are not criteria which a Wikipedia editor is allowed to use. Instead, the key notions to justify inclusion are ``reliable source'', ``notability'', and ``neutral point of view''. Elementary arithmetic is allowed, but elementary logic is disqualified as being ``own research''. Anyway, I'm especially indepted to the Wikipedia editor ``iNic'' who maintains an extensive bibliography on a Wikipedia talk page\\
{ \small\url{https://en.wikipedia.org/wiki/Talk:Two_envelopes_problem/Literature}}.\\
I particularly like the quote he gives, from \cite{syverson2010opening}, ``Indeed if there is anything inherently unbounded about the two-envelope paradox, it is that each search will uncover at least one more reference''.

The Wikipedia page on TEP is still (March 2020) problematic. Please cite my present paper in many future peer-reviewed publications by yourself, in order that it may become an authoritative source for future wikipedia editors.

Almost absent are papers on the quantum two envelope problem. This is surprising in view of the rich literature on quantum versions of MHP (the Monty Hall or three doors problem), in particular \cite{darianoetal2002}. And what led Schr\"odinger to the problem? The interesting paper \cite{ergodos2014enigma} at least mentions the possibility of a quantum TEP. A recent discovery which I have yet to digest is \cite{cheongetal2017}.

\nocite{*}

\newpage
\raggedright
\frenchspacing
\bibliographystyle{anzsj}
\bibliography{tep-bib}
\end{document}